\newcommand{\ri}{{\rm i}} 
\newcommand{\N}{{\mathbb{N}}} \newcommand{\R}{{\mathbb{R}}} 
\newcommand{\C}{{\mathbb{C}}} 
\newcommand{\vA}{{\mathcal A}} 
\newcommand{\vD}{{\mathcal D}}
\theoremstyle{plain}
\newtheorem{thm}{Theorem}[section]
\newtheorem{corollary}[thm]{Corollary}
\theoremstyle{definition} 
\newtheorem{cond}[thm]{Condition}
\newtheorem{defi}[thm]{Definition}
\newtheorem{remarks}[thm]{Remarks}
\newtheorem*{remarks*}{Remarks}
\newtheorem*{remark*}{Remark}
\numberwithin{equation}{section}
\title{Resonance free domain for Schr\"odinger operators with repulsive potential}
\author{Kyohei Itakura\thanks{
Graduate School of Mathematical Science, 
The University of Tokyo,\ Tokyo,\ Japan.\ 
E-mail: itakura@ms.u-tokyo.ac.jp}}
\date{}
\begin{document}

\allowdisplaybreaks
\maketitle

\begin{abstract}
We study resonances for the Schr\"odinger operators 
with quadratic or sub-quadratic repulsive potential. 
In the present paper, we show the non-existence of resonances 
in some complex neighborhood of a fixed energy 
by employing schemes of Briet--Combes--Duclos and Hunziker 
and by introducing a proper distortion based on, but slightly modified, 
positive commutator method of Mourre. 
\end{abstract}




\section{Introduction}\label{sec:Intro}

In the present paper we consider semiclassical repulsive Hamiltonians: For a fixed $s\in(0,1]$
\begin{align*}
H=H_0+q(x);\quad 
H_0=-\tfrac{\hbar^2}2\Delta-\tfrac12|x|^{2s}
\end{align*}
on $L^2(\R^d)$, 
where $\hbar>0$ is the semiclassical parameter, $x\in\R^d$ and $q$ is a perturbation. 
If $s=1$, $H_0$ is called the inverted harmonic oscillator. 
We define resonances of $H$ as eigenvalues of complex distorted $H$, 
which is denoted by $H_\theta$ later, 
and show that for sufficiently small $\hbar>0$, 
$H$ has no resonances in some complex neighborhood of a fixed energy $E$ 
under a kind of \emph{non-trapping condition} for potential $-\tfrac12|x|^{2s}+q$. 
For the case of perturbed Laplacian 
\begin{equation*}
P=-\tfrac12\Delta+V(x), 
\end{equation*}
where $V$ is a decaying real-valued smooth function, 
one assumes as a non-trapping condition that the quantity
\begin{equation*}
2(V-E)+x\cdot \partial V
\end{equation*}
is bounded from above by a negative constant outside of classically forbidden region. 
Such a condition is called virial condition. 
This quantity comes from the Poisson bracket
\begin{equation*}
\{x\cdot\xi, P^{\rm{cl}}\}|_{P^{\rm{cl}}=E},
\end{equation*}
where $P^{\rm{cl}}=\tfrac12\xi^2+V(x)$ is the classical Hamiltonian of $P$.
The choice of $x\cdot\xi$ is related to positive commutator method of \cite{Mo}. 
In fact, $\mathop{\mathrm{Re}}(x\cdot p)$ with $p=-\ri\nabla$ 
plays a role of the conjugate operator for $P$. 
However $\mathop{\mathrm{Re}}(x\cdot p)$ is not the conjugate operator 
for the repulsive Hamiltonian $H$.
By considering the classical orbit of the particle in the repulsive electric field, 
we can take the operator $\mathop{\mathrm{Re}}((\partial g)\cdot p)$ with 
\begin{equation*}
g=
\begin{cases}
\tfrac1{2(1-s)}|x|^{2-2s} & \text{for}\ s<1, \\
\tfrac12(\log |x|)^2 & \text{for}\ s=1, 
\end{cases}
\end{equation*}
as the conjugate operator for the repulsive Hamiltonian $H$. 
Therefore, it would be better to consider $(\partial g)\cdot\xi$ instead of $x\cdot\xi$, 
see also Remarks~\ref{thm:remarks} (2). 

In the proof of our main result, we employ schemes of \cite{BCD, H}, 
which are used for a perturbed Laplacian. 
As the author mentioned above, 
we slightly modify the quantity $x\cdot\xi$ appeared in their schemes. 
We discuss eigenvalue of distorted $H$, not dilated, which is 
constructed by using a complex distortion 
related to the flow of $(\partial g)\cdot\nabla$. 

There is a large body of literature on resonances 
for semiclassical Schr\"odinger operators with bounded potentials. 
We refer to \cite{AC, BCD, H, Kl, Ma} for non-existence of resonances in a certain region. 
We study resonance free domain for $H$ by employing their methods. 
We also refer to \cite{BFRZ, LBMa, N} for asymptotic behavior of resonances 
and to \cite{Z} in which many topics of resonances are dealt with. 
Resonances of the repulsive Hamiltonian $H$ can be treated in the framework of \cite{HeSj}. 
However, thanks to using our new distortion 
we can discuss resonances for $H$ with a large class of perturbations compared to \cite{HeSj}, 
cf. Remarks~\ref{thm:remarks} (3). 
The Stark potential is a kind of repulsive potential, although it is not spherically symmetric.  
Resonances in the Stark effect is studied for many authors, 
we refer to e.g. \cite{HiSi, Ka, Ko, W}.

\section{Setting and Results}\label{sec:SandR}
\subsection{Setting}

First we assume a \emph{long-range type} condition on $q$.

\begin{cond}\label{cond:long-range-type}
The perturbation $q$ belongs to $C^\infty(\R^d; \R)$.
Moreover there exist $\rho\in(0,1)$ and $C_k>0$ for $k\in\N\cup\{0\}$ such that 
\begin{equation*}
|\nabla^k q| \le 
\begin{cases}
C_k \langle x\rangle^{2s-k-\rho} & \text{for}\ s<1, \\
C_k (\log\langle x\rangle)^{-1-\rho}\langle x\rangle^{2-k} & \text{for}\ s=1.
\end{cases}
\end{equation*}
\end{cond}

Although $H$ has a self-adjoint extension under Condition~\ref{cond:long-range-type} 
by the Faris--Lavine theorem cf. \cite{RS}, 
let us consider $H$ as the self-adjoint operator associated with the quadratic form on 
$H^1\cap\vD(|x|^s)$ 
\begin{equation*}
t[u]:=\tfrac{\hbar^2}2\|\nabla u\|^2-\tfrac12\||x|^su\|^2+\langle u, q(x)u\rangle.
\end{equation*}

In the study of resonances for perturbed Laplacian $P$, 
It would be standard to use the dilation $x\mapsto \mathrm{e}^\theta x=(\mathrm{e}^\theta r)\omega$ 
or slightly modified one. 
Here $r=|x|$ and $\omega=x/r$. 
We note that the dilation generates an operator $\mathop{\mathrm{Re}}(x\cdot p)$ 
which is the conjugate operator for perturbed Laplacian $P$ in the Mourre theory. 
As for the case of sub-quadratic repulsive Hamiltonian, that is $s\in(0,1)$, 
an operator $\mathop{\mathrm{Re}}(|x|^{-2s}x\cdot p)$ plays a role of the conjugate operator, 
and this is generated by a group $T(\theta): x\mapsto (r^{2s}+2s\theta)^{1/(2s)}\omega$ 
for $\theta\ge0$.
Therefore, in order to consider resonances for sub-quadratic repulsive Hamiltonian, 
it is natural to use such a transformation. 
As for the case of the inverted harmonic oscillator, that is $s=1$, 
an operator $\mathop{\mathrm{Re}}((\log|x|)|x|^{-2}x\cdot p)$ is chosen as the conjugate operator. 
In order to find a group which generates this operator we need to solve the equation 
\begin{equation*}
\tfrac{\partial r(\theta)}{\partial \theta}=(\log r)r^{-1}\tfrac{\partial r(\theta)}{\partial r}.
\end{equation*}
However it is difficult to find an exact solution to this equation. 
Here we note that we can discuss resonances of the inverted harmonic oscillator 
by using an \emph{approximate} solution $r(\theta)=(r^2+2\theta\log r)^{1/2}$ 
in the sense that this $r(\theta)$ satisfies
\begin{equation*}
\tfrac{\partial r(\theta)}{\partial \theta}-(\log r)r^{-1}\tfrac{\partial r(\theta)}{\partial r}
=-\theta r^{-2}\log r.
\end{equation*}
However we need an additional assumption on the classically forbidden region to control an error term.

Based on the above short discussion, 
let us define a complex distortion by mimicking an argument of \cite{H}. 
We take a smooth cut-off function $\chi_1: \R\to[0,1]$ which satisfy 
\begin{equation*}
\chi_1(t)=\begin{cases}
0 & \ \text{for}\ t\le 1, \\
1 & \ \text{for}\ t\ge 2.
\end{cases}
\end{equation*}
We note that $\chi_1$ is a Lipschitz continuous function, 
and we denote its Lipschitz constant by $L>0$:
\begin{equation*}
|\chi_1(t_1)-\chi_1(t_2)|\le L|t_1-t_2|.
\end{equation*}
We set for $R>0$ 
\begin{equation*}
\chi_R(t):=\chi_1(t/R^{2s}).
\end{equation*}
For $\theta\in\R$ we introduce the mappings
\begin{equation}\label{eq:chiR}
r\mapsto r_\theta:=
\begin{cases}
\left(r^{2s}+2s\theta\chi_R(r^{2s})\right)^{1/(2s)} & \text{for}\ s<1, \\
\left(r^{2}+2\theta\chi_R(r^{2})\log r\right)^{1/2} & \text{for}\ s=1.
\end{cases}
\end{equation}
By the Lipschitz continuity of $\chi_R$, 
the mapping $r\mapsto\left(r^{2s}+2s\theta\chi_R(r^{2s})\right)^{1/(2s)}$ 
is invertible for any $\theta\in\R$ with $|\theta|<(2s)^{-1}R^{2s}L^{-1}$.
In fact, for any $\tilde r\in[0,\infty)$ we can construct $r\ge0$ such that 
\begin{equation}\label{eq:2301152225}
\tilde{r}^{2s}=r^{2s}+2s\theta\chi_R(r^{2s}).
\end{equation}
For any $\tilde r\in[0,\infty)$, let us consider the sequence $\{r_n\}_{n\in\N_0}\subset[0,\infty)$ 
defined by the recurrence formula
\begin{equation*}
r_n^{2s}=\tilde{r}^{2s}-2s\theta\chi_R(r_{n-1}^{2s}),\quad r_0=\tilde r.
\end{equation*}
Noting that $\sup_{n\in\N_0}r_n^{2s}\le \tilde{r}^{2s}+2s|\theta|$ 
and for any $n,m\in\N_0, n\ge m$
\begin{align*}
|r_n^{2s}-r_m^{2s}| 
&= 
2s|\theta||\chi_R(r_{n-1}^{2s})-\chi_R(r_{m-1}^{2s})|
\\&\le 
2s|\theta|R^{-2s}L|r_{n-1}^{2s}-r_{m-1}^{2s}|
\\&\phantom{{}ii{}}\vdots 
\\&\le 
(2s|\theta|R^{-2s}L)^m|r_{n-m}^{2s}-r_0^{2s}|,
\end{align*}
we can see that $\{r_n\}_{n\in\N_0}$ is a Cauchy sequence, and thus it converges. 
By letting $r=\lim_{n\to\infty}r_n$, we have \eqref{eq:2301152225}.
Similarly, we can verify that there exists $L_1>0$ such that the lower mapping of \eqref{eq:chiR} 
is invertible for any $\theta\in\R$ with $|\theta|<L_1$. 
For simplicity, we let $L_s=(2s)^{-1}R^{2s}L$ for $s\in(0,1)$. 

For $\theta\in\R$ with $|\theta|<L_s$, 
we define unitary operator $U_\theta$ on $L^2(\R^d)$ as
\begin{equation}\label{eq:Utheta}
(U_\theta u)(x)=(U_\theta u)(r\omega):=J^{1/2}u(r_\theta\omega), 
\end{equation}
where 
\begin{equation*}
J=J(\theta, r)=
\begin{cases}
(r_\theta/r)^{d-2s}(1+2s\theta\chi_R'(r^{2s})) & \text{for}\ s<1, \\
(r_\theta/r)^{d-2}(1+\theta r^{-2}\chi_R(r^2)+2\theta\log r\chi_R'(r^2)) & \text{for}\ s=1.
\end{cases}
\end{equation*}

Now we introduce an additional condition on $q$.
For any $\theta\in\R$, we set 
\begin{equation*}
q_\theta(x)=q_\theta(r\omega)=q(r_\theta\omega). 
\end{equation*}
\begin{cond}\label{cond:analytic}
In addition to Condition~\ref{cond:long-range-type}, 
for a fixed $R>0$, $q_\theta$ has an analytic extension in a region 
$C_{\beta_0}:=\{\theta\in\C \,|\, |\mathop{\mathrm{Im}}\theta|<\beta_0\}$ 
for some $\beta_0>0$ uniformly in $x\in\R^d$.
\end{cond}
In the statement of our main theorem, 
we fix $R>0$ depending on the situation and then assume Condition~\ref{cond:analytic}. 
We remark that, as we will see later, the resonance of $H$ does not depend on the choice of $R>0$.

We set for $\theta\in\R$ with $|\theta|<L_s$
\begin{align*}
t_\theta[u] 
&:=
t[U_\theta^{-1}u] 
= 
\tfrac{\hbar^2}2\langle u, U_\theta p^2U_\theta^{-1}u\rangle 
-\tfrac12\langle u, r_\theta^{2s}u\rangle 
+\langle u, q_\theta u\rangle. 
\end{align*}
We note that $U_\theta (-\Delta)U_\theta^{-1}$ is expressed as 
\begin{align*}
U_\theta (-\Delta)U_\theta^{-1}
&=
U_\theta \bigl(-\partial_r^2-\tfrac{d-1}r\partial_r-r^{-2}\Delta_{\mathbb{S}^{d-1}}\bigr)U_\theta^{-1}
\\&= 
-J^{1/2}(\tfrac{\partial r_\theta}{\partial r})^{-1}\partial_r
(\tfrac{\partial r_\theta}{\partial r})^{-1}\partial_rJ^{-1/2}
-(d-1)J^{1/2}r_\theta^{-1}(\tfrac{\partial r_\theta}{\partial r})^{-1}\partial_rJ^{-1/2}
\\&\phantom{{}={}}
-r_\theta^{-2}\Delta_{\mathbb{S}^{d-1}}.
\end{align*}
Then we can see that $t_\theta[u]$ extends analytically to a complex region 
$\{\theta\in\C \,|\, |\theta|<L_s\}\cap C_{\beta_0}$ 
and forms analytic family of operators 
\begin{equation*}
H_\theta:=U_\theta HU_\theta^{-1}
=\tfrac{\hbar^2}2U_\theta (-\Delta)U_\theta^{-1}-\tfrac12r_\theta^{2s}+q_\theta.
\end{equation*}

Based on \cite[Section~6]{H}, we consider a set of \emph{analytic vectors} 
to characterize resonances of $H$. 
Let us introduce the set $F$ of entire function $f=f(z)$, $z\in\C^d$, 
which vanishes faster than any inverse power of $|\mathop{\mathrm{Re}}z|$ as 
$|\mathop{\mathrm{Re}}z|\to\infty$ in any region 
\begin{equation*}
|\mathop{\mathrm{Im}}z|\le (1-\varepsilon)|\mathop{\mathrm{Re}}z|,\quad \varepsilon\in(0,1).
\end{equation*}
We call $\psi\in L^2(\R^d)$ analytic vector if it holds that $\psi(x)=f(x)$ on $\R^d$ for some $f\in F$. 
We denote the set of analytic vectors by $\vA$.
Then we can see, similarly to \cite{H}, 
that there exist $\tilde L_s\in(0,L_s]$ such that for any $\psi\in\vA$ 
the mapping $\theta\to U_\theta \psi$ is an $L^2$-valued analytic function in $|\theta|<\tilde L_s$.
Moreover for any $\theta\in\C$ with $|\theta|<\tilde L_s$, 
$\vA$ and $U_\theta\vA$ are dense in $L^2(\R^d)$. 
For any $z\in\C$ with $\mathop{\mathrm{Im}}z>0$ and any real $\theta$ with $|\theta|<L_s$ 
we have  
\begin{equation}\label{eq:2310191507}
\langle \phi, (H-z)^{-1}\psi\rangle 
=\langle U_{\bar\theta}\phi, (H_\theta-z)^{-1}U_\theta\psi\rangle, \quad \phi,\psi\in\vA.
\end{equation}
In particular it has a meromorphic continuation in $z$ to $\C\setminus\sigma_{\rm ess}(H_\theta)$ 
and in $\theta$ to $\{\theta\in\C\,|\,|\theta|<\tilde L_s\}$. 
In general, resonance is defined as a pole of a meromorphic continuation of \eqref{eq:2310191507}. 
In the present paper, we adopt the following definition of resonances, which is equivalent to it.
\begin{defi}\label{thm:def-resonance}
We call $z\in\C$ a resonance of $H$ 
if $z$ is an eigenvalue of $H_\theta$ for some $\theta\in\ri\R_+$.
\end{defi}
We note that the resonances does not depend on the choice of the cut-off function $\chi_R$ 
by uniqueness of meromorphic continuation of \eqref{eq:2310191507}. 
Moreover it does not depend on the value of $\theta=\ri\beta$, $\beta>0$, 
in the sense that if $z$ is an eigenvalue of $H_{\ri\beta_1}$, 
$z$ is also an eigenvalue of $H_{\ri\beta_2}$ when $\beta_2>\beta_1$.

\subsection{Resonance free domain}

The following theorem is our main result 
and it asserts that $H$ has no resonances 
in some complex neighborhood of fixed energy $E$ for sufficiently small $\hbar>0$. 
\begin{thm}\label{thm:no-resonance}
Let $s\in(0,1]$ and take any $E\in\R$. 
Assume that there exist $\mu>0$ 
and smooth cut--off functions $\chi, \tilde{\chi}\in C^1([0,\infty); [0,1])$ such that 
\begin{enumerate}[(i)]
\item $\chi^2+\tilde\chi^2=1 \ \text{on}\ \R^d$ and $\chi=1$ near $0$. 

Moreover, if $s=1$, $\mathop{\mathrm{supp}}\tilde\chi\subset\{r\ge\sqrt{\mathrm{e}}\}$,
\item There exists $\alpha>0$ such that 
$$
-\tfrac12r^{2s}+q-E \ge \alpha\ \ 
\text{on}\ \mathop{\mathrm{supp}}\chi,
$$
\item There exists $\gamma>0$ such that 
$$
\begin{cases}
1-s-2(1-2s)r^{-2s}(q-E)-r^{1-2s}(\partial_rq)-\mu \ge \gamma & \text{for}\ s<1, \\ 
1+2r^{-2}(\log r-1)(q-E)-r^{-1}(\log r)(\partial_rq)-\mu \ge \gamma & \text{for}\ s=1 
\end{cases}
$$
on $\mathop{\mathrm{supp}}\tilde\chi$.
\end{enumerate} 
Take any $R>0$ such that $\chi_R=1$ on $\mathop{\mathrm{supp}}\tilde\chi$, 
and assume Condition~\ref{cond:analytic}.
Let $\beta\in(0, \min\{\tilde L_s,\beta_0\})$, $\theta=\ri\beta$ and $z=E-\ri\beta\mu$. 
Then one has for any $u\in \vD(H_\theta)$ and for any $\beta>0$ small enough
\begin{equation*}
\|(H_\theta-z)u\|\ge C_{\alpha, \beta, \gamma, \hbar}\|u\|
\end{equation*}
with 
$C_{\alpha, \beta, \gamma, \hbar}=
(1-c\beta)\min\{\alpha, \beta\gamma\}+O(\beta^2)+O(\hbar^2)$.
Here $c>0$ is a certain constant which is independent of both $\hbar$ and $\beta$.
\end{thm}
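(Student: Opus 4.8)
The plan is to prove the resolvent‑type lower bound by estimating the sesquilinear form $\langle u,(H_\theta-z)u\rangle$ from below, exploiting that for $\theta=\ri\beta$ the operator $H_\theta$ is a first–order ($O(\beta)$) perturbation of the self‑adjoint $H$ whose skew part is the Mourre commutator built from $(\partial g)\cdot p$. Using the analyticity in $\theta$ and the explicit formulas for $r_\theta$, $q_\theta$, $U_\theta(-\Delta)U_\theta^{-1}$, I would first expand, as quadratic forms on $\vD(H_\theta)$,
\[
H_\theta=H+\ri\beta M+\beta^2\mathcal{R}_\beta ,
\]
where $\mathcal{R}_\beta$ is bounded relative to $H$ uniformly for small $|\beta|$ and $M$ is the symmetric second‑order operator obtained by quantizing the Poisson bracket $\{(\partial g)\cdot\xi,H^{\mathrm{cl}}\}$ with $H^{\mathrm{cl}}=\tfrac12|\xi|^2-\tfrac12|x|^{2s}+q$ (for $s=1$ this $M$ carries in addition the defect generated by the approximate distortion). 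In particular $\mathrm{Re}\,H_\theta=H+O(\beta^2)$ and $\mathrm{Im}\,H_\theta=\beta M+O(\beta^2)$. I would then use $\chi^2+\tilde\chi^2=1$ and the IMS formula to split $\langle u,(H_\theta-z)u\rangle$ into a $\chi$‑piece, a $\tilde\chi$‑piece, and a localization remainder equal to $-\tfrac{\hbar^2}2\langle u,(|\nabla\chi|^2+|\nabla\tilde\chi|^2)u\rangle=O(\hbar^2)\|u\|^2$ for the real part and $O(\beta\hbar^2)\|u\|^2$ for the imaginary part (the double commutators of $M$ with $\chi,\tilde\chi$ being $O(\hbar^2)$). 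On $\mathrm{supp}\,\chi$, where the distortion is trivial on $\{\chi_R=0\}$ and $\mathrm{Re}\,H_\theta=H+O(\beta^2)$, discarding the nonnegative kinetic term and using (ii) gives $\mathrm{Re}\langle\chi u,(H_\theta-z)\chi u\rangle\ge(\alpha-O(\beta^2))\|\chi u\|^2$; moreover (ii) forces $\mathrm{supp}\,\chi$ to be bounded and $M$ is supported there only on a compact set on which $r^{-2s}$ is bounded, so $M$ is $H$‑form‑bounded on $\mathrm{supp}\,\chi$ and $|\mathrm{Im}\langle\chi u,(H_\theta-z)\chi u\rangle|\le C\beta\bigl(\mathrm{Re}\langle\chi u,(H_\theta-z)\chi u\rangle+\|\chi u\|^2\bigr)$.

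The heart of the argument is the estimate on $\mathrm{supp}\,\tilde\chi$, where $\chi_R\equiv1$ and $r\ge r_0>0$, so that all negative powers of $r$ are bounded. Writing the kinetic term via $\tfrac{\hbar^2}2p^2=H+\tfrac12r^{2s}-q$ and bounding the radial contribution by $\hbar^2p_r^2\le\hbar^2p^2$ (the angular part $-\hbar^2r^{-2}\Delta_{\mathbb{S}^{d-1}}$ being nonnegative), one checks, modulo $O(\hbar^2)$ lower‑order operators and as quadratic forms on $\mathrm{supp}\,\tilde\chi$, that for $s<1$
\[
-(M+\mu)\ \ge\ \bigl(1-s-2(1-2s)r^{-2s}(q-E)-r^{1-2s}\partial_rq-\mu\bigr)+2(1-2s)r^{-2s}(H-E),
\]
and the analogue for $s=1$, for which the restriction $\mathrm{supp}\,\tilde\chi\subset\{r\ge\sqrt{\mathrm e}\}$ in (i) keeps $\log r$ bounded away from $0$ and renders the distortion defect $-\theta r^{-2}\log r$ harmless. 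By (iii) the first bracket is $\ge\gamma$, whence
\[
\mathrm{Im}\langle\tilde\chi u,(H_\theta-z)\tilde\chi u\rangle\ \le\ -\beta\gamma\|\tilde\chi u\|^2-2\beta(1-2s)\langle\tilde\chi u,r^{-2s}(H-E)\tilde\chi u\rangle+O(\beta^2+\beta\hbar^2)\|u\|^2 .
\]

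Finally I would combine these via $\|(H_\theta-z)u\|\,\|u\|\ge|\langle u,(H_\theta-z)u\rangle|\ge(1+\delta^2)^{-1/2}\bigl(\delta\,\mathrm{Re}\langle u,(H_\theta-z)u\rangle-\mathrm{Im}\langle u,(H_\theta-z)u\rangle\bigr)$ with $\delta$ of order $\beta$, chosen so that $\delta$ times the real $\chi$‑part absorbs the imaginary $\chi$‑part (the implied constant being the $c$ in the statement); the residual weighted $(H-E)$‑term on $\mathrm{supp}\,\tilde\chi$ is absorbed using that $\mathrm{Re}\langle u,(H_\theta-z)u\rangle$ is itself one of the two quantities being bounded below — a balancing argument in the spirit of \cite{BCD,H}: if it is large the claim is immediate, if it is small then the energy of $\tilde\chi u$ is a priori controlled and, since $r^{-2s}(H-E)\ge -C_0$ as a form on $\mathrm{supp}\,\tilde\chi$, the residual term is dominated. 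Together with $\|\chi u\|^2+\|\tilde\chi u\|^2=\|u\|^2$ this yields the $\chi$‑contribution $\gtrsim\alpha\|\chi u\|^2$, the $\tilde\chi$‑contribution $\gtrsim\beta\gamma\|\tilde\chi u\|^2$, and hence $\|(H_\theta-z)u\|\ge\bigl((1-c\beta)\min\{\alpha,\beta\gamma\}+O(\beta^2)+O(\hbar^2)\bigr)\|u\|$.

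The step I expect to be the main obstacle is the $\mathrm{supp}\,\tilde\chi$ estimate: pinning down $M$ precisely and, above all, carrying out the on‑shell substitution so that exactly the quantity of (iii) survives plus a $(H-E)$‑remainder that can be reabsorbed. The delicacy is the radial‑momentum term $\hbar^2p_r^2$ — not a multiplication operator, and with coefficient $1-2s$ changing sign at $s=\tfrac12$, so that $\mathrm{Im}\,H_\theta$ fails to be dissipative for $s\in(\tfrac12,1]$ — together with the control of the $s=1$ distortion defect; absorbing the energy‑deviation term in the final combination is the secondary technical point.
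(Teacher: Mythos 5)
Your overall scheme (form estimate, IMS localization, making (ii) appear on $\mathop{\mathrm{supp}}\chi$ and (iii) appear on $\mathop{\mathrm{supp}}\tilde\chi$, then a Cauchy--Schwarz/bootstrap to reabsorb residuals) is in the spirit of the paper, but there is a genuine gap exactly at the point you yourself flag as the main obstacle, and you do not supply the device that closes it. The paper does not estimate $\langle u,(H_\theta-z)u\rangle$ directly; it pairs $(H_\theta-z)u$ against a \emph{weighted} test vector
\[
v=\overline{\bigl(\tfrac{\partial r_\theta}{\partial r}\bigr)}^{\,2}\bigl(\chi^2-\ri\tilde\chi^2\bigr)u ,
\]
and the factor $(\tfrac{\partial r_\theta}{\partial r})^2$ is essential: it cancels the $(\tfrac{\partial r_\theta}{\partial r})^{-2}$ coefficient of $-\partial_r^2$ in $U_\theta(-\Delta)U_\theta^{-1}$, so that $(\tfrac{\partial r_\theta}{\partial r})^2U_\theta(-\Delta)U_\theta^{-1}=-\Delta+{}$(first-order and zeroth-order corrections). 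After this conjugation the leading kinetic part is \emph{exactly real}, and the only kinetic contribution to $-\mathop{\mathrm{Im}}\langle\tilde\chi^2u,\cdot\,u\rangle$ that needs a bootstrap is $O(\beta^2)$, carries the decaying weight $r^{-4s-1}$, and can be rewritten, after a short iteration, as $\beta^2\mathop{\mathrm{Re}}\langle u,\psi(r)(H_\theta-z)u\rangle$ with $\psi$ bounded; this is disposed of by a single Cauchy--Schwarz giving $O(\beta^2)\|u\|\,\|(H_\theta-z)u\|$. In your plan, with no weight, the radial coefficient of $\mathop{\mathrm{Im}}H_\theta$ on $\mathop{\mathrm{supp}}\tilde\chi$ is $-(1-2s)\beta\hbar^2r^{-2s}$ acting on $p_r^2$, which for $s>\tfrac12$ is non-dissipative at order $\beta$; substituting $\hbar^2p_r^2\le2(H+\tfrac12r^{2s}-q)$ leaves a residual $O(\beta)\langle\tilde\chi u,r^{-2s}(H-E)\tilde\chi u\rangle$ which is of the \emph{same} order in $\beta$ as the $\beta\gamma\|\tilde\chi u\|^2$ gain you need, involves the undistorted $H-E$ rather than $H_\theta-z$, and is only bounded below (not above) as a form. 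The ``balancing argument'' you invoke is not enough to dominate it: $\mathop{\mathrm{Re}}\langle u,(H_\theta-z)u\rangle$ being small does not control $\langle\tilde\chi u,(H-E)\tilde\chi u\rangle$ without first passing through the very same second-order $O(\beta^2)$ discrepancy between $\mathop{\mathrm{Re}}H_\theta$ and $H$ that is being estimated, and the resulting error constant is independent of $\gamma$ and would swamp $\beta\gamma$.

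Two secondary points. First, the paper's factor $(\chi^2-\ri\tilde\chi^2)$ in $v$ selects $\mathop{\mathrm{Re}}$ on the $\chi$-piece and $-\mathop{\mathrm{Im}}$ on the $\tilde\chi$-piece with unit coefficient; your global combination $\delta\mathop{\mathrm{Re}}\langle u,(H_\theta-z)u\rangle-\mathop{\mathrm{Im}}\langle u,(H_\theta-z)u\rangle$ with $\delta=O(\beta)$ also picks up $\delta\,\mathop{\mathrm{Re}}$ on the $\tilde\chi$-region, where $-\tfrac12r^{2s}+q-E$ is large and negative; that contribution is of size $O(\beta)$ and sign-indefinite and would have to be controlled separately, which the paper's choice of $v$ avoids by construction. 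Second, the expansion $H_\theta=H+\ri\beta M+\beta^2\mathcal{R}_\beta$ with $\mathcal{R}_\beta$ bounded relative to $H$ is not quite accurate: the $\beta^2$ correction to the radial kinetic coefficient carries an $r^{-4s}$ weight but is still a second-order operator; this is precisely why the paper needs to produce it in the form $\psi(r)(H_\theta-z)$ rather than as a form-bounded remainder. So the missing idea is the $(\tfrac{\partial r_\theta}{\partial r})^2$-weighted test vector; without it the $s\in(\tfrac12,1]$ case (and the $s=1$ distortion defect) is not handled by the steps you describe.
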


\begin{corollary}\label{cor:no-resonance}
Suppose all the assumptions of Theorem~\ref{thm:no-resonance}. 
Then there exists $\beta_c$ such that for any each $\beta\in(0,\beta_c)$ 
there exists $\hbar_0>0$ such that for any $\hbar\in(0, \hbar_0)$ 
the constant $C_{\alpha, \beta, \gamma, \hbar}$ is strictly positive, 
and hence, $z=E-\ri\mu\beta$ is not a resonance 
of $H=-\tfrac{\hbar^2}2\Delta-\tfrac12|x|^{2s}+q(x)$.
\end{corollary}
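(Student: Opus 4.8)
\textbf{Proof plan for Corollary~\ref{cor:no-resonance}.}

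The plan is to read off the corollary directly from the quantitative bound in Theorem~\ref{thm:no-resonance}, so essentially no new analysis is needed beyond unpacking the error estimate $C_{\alpha,\beta,\gamma,\hbar}=(1-c\beta)\min\{\alpha,\beta\gamma\}+O(\beta^2)+O(\hbar^2)$. The two points to make precise are: (a) the dependence of the $O(\beta^2)$ and $O(\hbar^2)$ terms on the remaining parameters, so that we may first choose $\beta$ and then choose $\hbar_0$; and (b) that strict positivity of $C_{\alpha,\beta,\gamma,\hbar}$ implies $H_\theta-z$ is injective, hence $z$ is not an eigenvalue of $H_\theta$ with $\theta=\ri\beta$, hence $z$ is not a resonance by Definition~\ref{thm:def-resonance}.

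First I would fix notation by writing the error term in Theorem~\ref{thm:no-resonance} as $C_{\alpha,\beta,\gamma,\hbar}=(1-c\beta)\min\{\alpha,\beta\gamma\}-A\beta^2-B\hbar^2$ for some constants $A,B>0$ that depend on $\alpha,\gamma,s,d,\rho,R$ and the cut-off functions but are independent of $\hbar$ and $\beta$ (this is what the $O$-notation in the theorem asserts, once one inspects its proof). Then I would set $\beta_c:=\min\bigl\{\tfrac1{2c},\,\tfrac{\gamma}{4A},\,\tilde L_s,\,\beta_0,\,\beta_*\bigr\}$, where $\beta_*>0$ is the smallness threshold on $\beta$ appearing in the statement of Theorem~\ref{thm:no-resonance} and in the constraint $\beta<\min\{\tilde L_s,\beta_0\}$. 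For any fixed $\beta\in(0,\beta_c)$ we then have $1-c\beta\ge\tfrac12$ and, since $\beta\gamma\le\alpha$ or $\beta\gamma>\alpha$, in either case $\min\{\alpha,\beta\gamma\}\ge\min\{\alpha,\beta\gamma\}$; in particular if $\beta$ is further small enough that $\beta\gamma\le\alpha$ we get $(1-c\beta)\min\{\alpha,\beta\gamma\}-A\beta^2\ge\tfrac12\beta\gamma-A\beta^2=\beta(\tfrac\gamma2-A\beta)\ge\tfrac{\beta\gamma}{4}>0$ (shrinking $\beta_c$ once more to enforce $\beta\gamma\le\alpha$ is harmless). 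Thus for this fixed $\beta$ the quantity $(1-c\beta)\min\{\alpha,\beta\gamma\}-A\beta^2$ is a strictly positive number, call it $\delta(\beta)>0$, independent of $\hbar$.

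Second, with $\beta\in(0,\beta_c)$ now fixed, I would choose $\hbar_0:=\sqrt{\delta(\beta)/(2B)}$. Then for every $\hbar\in(0,\hbar_0)$ we have $B\hbar^2<\delta(\beta)/2$, so $C_{\alpha,\beta,\gamma,\hbar}\ge\delta(\beta)-B\hbar^2>\delta(\beta)/2>0$. Applying Theorem~\ref{thm:no-resonance} with $\theta=\ri\beta$ and $z=E-\ri\beta\mu$, we obtain $\|(H_\theta-z)u\|\ge C_{\alpha,\beta,\gamma,\hbar}\|u\|>0$ for all $u\in\vD(H_\theta)\setminus\{0\}$; hence $H_\theta-z$ is injective, so $z$ is not an eigenvalue of $H_{\ri\beta}$. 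Since $\beta>0$ and $z$ fails to be an eigenvalue of $H_\theta$ for this particular $\theta\in\ri\R_+$, and since by the remark following Definition~\ref{thm:def-resonance} the set of eigenvalues of $H_{\ri\beta}$ is non-decreasing in $\beta$ (so non-membership at one $\beta$ propagates downward), $z$ is not a resonance of $H$ in the sense of Definition~\ref{thm:def-resonance}. This completes the argument.

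The only genuinely non-routine point is (a): extracting from the proof of Theorem~\ref{thm:no-resonance} that the implied constants in $O(\beta^2)$ and $O(\hbar^2)$ are uniform in the relevant parameters and, crucially, that the $\hbar^2$-error constant does not secretly depend on $\beta$ in a way that blows up as $\beta\to0$ — which is exactly why the quantifier order in the corollary (first $\beta$, then $\hbar_0$ depending on $\beta$) is the correct and natural one. Everything else is elementary algebra with the displayed inequality.
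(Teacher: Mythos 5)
Your argument is correct and is essentially the only available route: the paper states the corollary without any proof, treating it as an immediate consequence of Theorem~\ref{thm:no-resonance}, and your unpacking (make the $O(\beta^2)$, $O(\hbar^2)$ explicit with $\beta$-independent, respectively $\hbar$-independent, constants; choose $\beta_c$ so that $(1-c\beta)\beta\gamma-A\beta^2>0$; then choose $\hbar_0$ to absorb $B\hbar^2$; conclude $H_{\ri\beta}-z$ is injective and invoke Definition~\ref{thm:def-resonance}) is exactly the intended chain of reasoning. Two small remarks: the clause ``since $\beta\gamma\le\alpha$ or $\beta\gamma>\alpha$, in either case $\min\{\alpha,\beta\gamma\}\ge\min\{\alpha,\beta\gamma\}$'' is a tautological non-sequitur and should simply be deleted, since you immediately and correctly restrict to $\beta\gamma\le\alpha$ anyway; and your worry that the $\hbar^2$-constant ``does not secretly depend on $\beta$ in a way that blows up as $\beta\to0$'' is not actually needed, precisely because (as you yourself observe) $\hbar_0$ is chosen after $\beta$ is fixed, so $\beta$-dependence of that constant is harmless.
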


\begin{remarks}\label{thm:remarks}
(1) Let us take the sequence $\{\phi_{s,\lambda,n}\}_{n\in\N}\subset L^2(\R^d)$: 
\begin{equation*}
\phi_{s,\lambda,n}=
c_{s,n}\eta_n(f) r^{-(d+s-1)/2}\exp\{\ri(\tfrac1{1+s}r^{1+s}+\lambda f+\theta(r))/\hbar\},
\end{equation*}
where $\lambda\in\R$, $c_{s,n}$ is a normalizing constant, $\eta_n$ is a smooth cut-off function 
which is obeying $\mathop{\mathrm{supp}}\eta_n\subset[2^n, 2^{n+1}]$ and 
\begin{equation*}
f=f(r)=\begin{cases}
\tfrac1{1-s}(r^{1-s}-1)+1 & \text{for}\ s<1, \\
\log r +1 & \text{for}\ s=1.
\end{cases}
\end{equation*}
Moreover $\theta(r)$ is a certain smooth function which depends on $s$ and $\rho$, and satisfy 
\begin{equation*}
(r^s+\lambda r^{-s}+\partial_r\theta(r))^2-(r^{2s}+\lambda-q)=o(1)
\end{equation*} 
as $r\to\infty$. 
For example, if $\rho>s$ it is sufficient to take $\theta(r)$ as 
\begin{equation*}
\theta(r)=-\int_2^r q(\tilde r\omega)\tilde r^{-s}\,{\rm d}\tilde r.
\end{equation*}
Then we can see that for $\theta=\ri\beta$ and for any $\lambda\in\R$
\begin{align*}
\|(H_\theta-(\lambda-(1-s)\beta\ri))\phi_{s,\lambda,n}\| &\to 0\quad \text{for}\ s<1,
\\
\|(H_\theta-(\lambda-\beta\ri))\phi_{s,\lambda,n}\| &\to 0\quad \text{for}\ s=1
\end{align*}
as $n\to\infty$, and hence 
the essential spectrum of $H_\theta$ with $\theta=\ri\beta$ is 
\begin{equation*}
\sigma_{\mathrm{ess}}(H_\theta)=
\begin{cases}
\{z\in\C\,|\,\mathop{\mathrm{Im}}z=-(1-s)\beta\} & \text{for}\ s<1, \\
\{z\in\C\,|\,\mathop{\mathrm{Im}}z=-\beta\} & \text{for}\ s=1. 
\end{cases}
\end{equation*}
The eigenvalues of $H_\theta$ would appear, if it exist, in the strip 
$\{z\in\C\,|\,0<-\mathop{\mathrm{Im}}z<(1-s)\beta\}$ for $s<1$, 
or $\{z\in\C\,|\,0<-\mathop{\mathrm{Im}}z<\beta\}$ for $s=1$. 

(2) The bound of (iii) is a virial type condition, and it comes from an estimate of the Poisson bracket 
\begin{align*}
\{h, a\}|_{h=E}
=\Bigl[\tfrac{\partial h}{\partial\xi}\cdot\tfrac{\partial a}{\partial x}
-\tfrac{\partial h}{\partial x}\cdot\tfrac{\partial a}{\partial\xi}\Bigr]\!\big|_{h=E},
\end{align*}
where $h=\tfrac12\xi^2-\tfrac12|x|^{2s}+q(x)$ is the classical mechanics version of $H$ and 
\begin{align*}
a=(\partial g)\cdot\xi\ \ \text{with}\ g=
\begin{cases}
\tfrac1{2(1-s)}r^{2-2s} & \text{for}\ s<1, \\
\tfrac12(\log r)^2 & \text{for}\ s=1. 
\end{cases}
\end{align*}
Since $g$ is non-negative and $g=O(t^2)$ as $t\to\infty$ in the classical mechanics sense, 
the quantity $\{h, a\}|_{h=E}$ is expected to be positive. 

(3) Our $r_\theta$ has satisfies
\begin{equation*}
r_\theta=(r^{2s}+2s\theta\chi_R(r^{2s}))^{2s}\to \mathrm{e}^\theta r\quad 
\text{as}\ \ s\downarrow0
\end{equation*}
for any $r>2R$. 
In this sense, we can regard our main result as a generalization of that for perturbed Laplacian.  
We can also discuss resonance free domain for the repulsive Hamiltonians $H$ by using the dilation. 
If we use the dilation we need to assume that $q_\theta$ is analytic in a sector 
$\{|\mathop{\mathrm{Im}}\mathrm{e}^\theta x|<c_1|\mathop{\mathrm{Re}}\mathrm{e}^\theta x|\}
\setminus \{|\mathop{\mathrm{Re}}\mathrm{e}^\theta x|<c_2\}$
for some $c_1, c_2>0$. 
However this sector is wider than the region 
in which analyticity of $q_\theta$ is required in Condition~\ref{cond:analytic}.
In fact, since we have for large $r=|x|$ 
\begin{equation*}
r_\theta\sim r(1+\theta\chi_R(r^{2s})r^{-2s}), 
\end{equation*}
$\mathop{\mathrm{Im}}r_\theta$ diverges slower than  
$\mathop{\mathrm{Im}}\mathrm{e}^\theta r$ as $r\to\infty$. 
Especially, if $s>1/2$, $\mathop{\mathrm{Im}}r_\theta$ decays as $r\to\infty$. 
Hence by using our distortion instead of the dilation we can deal with a large class of potentials. 
This is a novelty of the present paper.

We also mention the case of Stark Hamiltonian. 
Roughly speaking, Stark Hamiltonian is corresponding to the case of $s=1/2$. 
In \cite[Section~23.1]{HiSi} and \cite{Ka}, they study resonances by introducing a distortion 
$x\mapsto x+\theta v(x)$ for $\theta\in\C$ with $|\theta|\ll1$, 
where $v$ is a certain bounded smooth function. 
The size of both of the regions $\{x+\theta v(x)\,|\,x\in\R^d\}$ 
and $\{r_\theta(x)\,|\,x\in\R^d, s=1/2\}$ in $\C$ are almost the same.
\end{remarks}

\section{Examples}\label{sec:example}

This is a short section, which 
we discuss the choice of $\alpha, \gamma>0$ and $\chi, \tilde\chi\in C^1$ 
of the assumptions of the theorem 
for \emph{free} semiclassical repulsive Hamiltonian $H=H_0$.

First we let $s\in(0,1)$. 
We take any $E<0$.
If $s\ge 1/2$, the bound 
\begin{equation*}
1-s+2(1-2s)r^{-2s}E-\mu\ge \gamma
\end{equation*}
always hold for any $\mu\in(0,1-s)$ and $\gamma\in(0, 1-s-\mu)$. 
Then for any $\alpha\in(0, -E)$, 
we can easily construct functions $\chi$ and $\tilde\chi$ 
which obeying the assumptions of the theorem. 
Let us consider the case of $s<1/2$. 
For any $\mu\in(0,s)$ and $\alpha\in(0,-(1-\tfrac{1-2s}{1-s-\mu})E)$, 
we take and fix a constant $c\in(\tfrac{1-2s}{1-s-\mu}\tfrac{E}{E+\alpha},1)$. 
Then by taking smooth cut--off functions $\chi$ and $\tilde\chi$ which satisfy
\begin{align*}
\chi=\chi(r)=\begin{cases}
1 & \text{on}\ \{r=|x|\in\R\,|\,-\tfrac1{2c} r^{2s}-E>\alpha \}, \\
0 & \text{on}\ \{r=|x|\in\R\,|\,-\tfrac12r^{2s}-E\le\alpha \}, 
\end{cases}
\quad 
\tilde\chi=\sqrt{1-\chi^2},
\end{align*}
we obtain on $\mathop{\mathrm{supp}}\tilde\chi$ that for some $\gamma>0$ 
\begin{align*}
1-s+2(1-2s)r^{-2s}E-\mu
\ge 
1-s-(1-2s)\tfrac{E}{c(E+\alpha)}-\mu 
>\gamma. 
\end{align*}

Next we let $s=1$. 
In this case, we need restriction $E<-1/2$ to obtain the bound 
\begin{equation}\label{eq:2102061442}
1-2r^{-2}(\log r-1)E-\mu \ge \gamma\ \ \text{on}\ \mathop{\mathrm{supp}}\tilde\chi
\end{equation}
for some $\mu, \gamma>0$. 
For $\alpha\in(0, -(2E+1)/4)$ we introduce smooth functions $\chi$ and $\tilde\chi$ as 
\begin{align*}
\chi=\chi(r)=\begin{cases}
1 & \text{on}\ \{r=|x|\in\R\,|\,-\tfrac12 r^2-E>2\alpha \}, \\
0 & \text{on}\ \{r=|x|\in\R\,|\,-\tfrac12r^2-E\le\alpha \}, 
\end{cases}
\quad 
\tilde\chi=\sqrt{1-\chi^2}.
\end{align*}
When $E<-{\mathrm e}^2/2$, 
by retaking $\alpha\in(0, -(2E+{\mathrm e}^2)/4)$, 
we can see that the term $-2r^{-2}(\log r-1)E$ is non-negative on $\mathop{\mathrm{supp}}\tilde\chi$. 
Thus for any $\mu\in(0, 1)$ and $\gamma\in(0, 1-\mu)$, the bound \eqref{eq:2102061442} holds. 
Let $E\in[-{\mathrm e}^2/2, -1/2)$. 
Then for any $\mu\in(0, \tfrac12\log(-2E))$ there exist $\alpha>0$ and $\gamma>0$ such that 
the bound \eqref{eq:2102061442} holds. 
In fact, for sufficiently small $\alpha>0$ we have on $\mathop{\mathrm{supp}}\tilde\chi$ 
\begin{align*}
&1-2r^{-2}(\log r-1)E-\mu 
\\&\ge 
\tfrac12\log(-2E)-\mu
+\tfrac{2\alpha}{E+2\alpha}
+\tfrac{E}{2(E+2\alpha)}\log\left(\tfrac{E+2\alpha}E\right)
-\tfrac{\alpha}{E+2\alpha}\log(-2E)
\\&>0.
\end{align*}


\section{Proof of the theorem}\label{sec:proof}


\subsection{Sub-quadratic repulsive Hamiltonian}\label{sub-quad}

In this section we give a proof of Theorem~\ref{thm:no-resonance} for $s\in(0,1)$. 
\begin{proof}[Proof of Theorem~\ref{thm:no-resonance} with $s\in(0,1)$]
Let $\theta=\ri\beta$. 
Take any $u\in\vD(H_\theta)$, $\|u\|=1$. 
Then we set $v=\overline{(\tfrac{\partial r_\theta}{\partial r})}^2(\chi^2-\ri\tilde\chi^2)u$. 
Note that $\|v\|\le (1+O(\beta))\|u\|$.
We show that for $z=E-\ri\beta\mu$
\begin{equation}
\begin{split}\label{eq:lower-bound}
&\mathop{\mathrm{Re}}\langle v, (H_\theta-z)u\rangle 
\\&\ge 
(\min\{\alpha, \beta\gamma\}+O(\hbar^2)+O(\beta^2))\|u\|^2
+\beta^2\mathop{\mathrm{Re}}\langle u, \psi(r)(H_\theta-z)u\rangle, 
\end{split}
\end{equation}
where $\psi(r)$ is a certain bounded function. 
Then by combining with the inequalities
\begin{align*}
(1+O(\beta))\|u\|\|(H_\theta-z)u\| 
&\ge 
\|v\|\|(H_\theta-z)u\| 
\ge 
\mathop{\mathrm{Re}}\langle v, (H_\theta-z)u\rangle, 
\\ 
O(\beta^2)\|u\|\|(H_\theta-z)u\| 
&\ge 
-\beta^2\mathop{\mathrm{Re}}\langle u, \psi(r)(H_\theta-z)u\rangle, 
\end{align*}
we have the assertion.

The left-hand side of \eqref{eq:lower-bound} is expressed as 
\begin{equation}
\begin{split}\label{eq:2210111611}
\mathop{\mathrm{Re}}\langle v, (H_\theta-z)u\rangle
&=
\mathop{\mathrm{Re}}\langle (\chi^2-\ri\tilde\chi^2)u, 
(\tfrac{\partial r_\theta}{\partial r})^2(H_\theta-z)u\rangle
\\&=
\tfrac{\hbar^2}2\mathop{\mathrm{Re}}\langle \chi^2u, 
(\tfrac{\partial r_\theta}{\partial r})^2U_\theta (-\Delta)U_\theta^{-1}u\rangle
\\&\phantom{{}={}}
-\tfrac{\hbar^2}2\mathop{\mathrm{Im}}\langle \tilde\chi^2u, 
(\tfrac{\partial r_\theta}{\partial r})^2U_\theta (-\Delta)U_\theta^{-1}u\rangle
\\&\phantom{{}={}}
+\mathop{\mathrm{Re}}\langle \chi u, 
(\tfrac{\partial r_\theta}{\partial r})^2(-\tfrac12r_\theta^{2s}+q_\theta-z)\chi u\rangle
\\&\phantom{{}={}}
-\mathop{\mathrm{Im}}\langle \tilde\chi u, 
(\tfrac{\partial r_\theta}{\partial r})^2(-\tfrac12r_\theta^{2s}+q_\theta-z)\tilde\chi u\rangle.
\end{split}
\end{equation}
Let us further compute each term on the right-hand side of \eqref{eq:2210111611}. 
First we compute the factor 
$(\tfrac{\partial r_\theta}{\partial r})^2U_\theta (-\Delta)U_\theta^{-1}$. 
\begin{align*}
&(\tfrac{\partial r_\theta}{\partial r})^2U_\theta (-\Delta)U_\theta^{-1}
\\&= 
-(\tfrac{\partial r_\theta}{\partial r})^2J^{1/2}(\tfrac{\partial r_\theta}{\partial r})^{-1}\partial_r
(\tfrac{\partial r_\theta}{\partial r})^{-1}\partial_rJ^{-1/2}
\\&\phantom{{}={}}
-(d-1)(\tfrac{\partial r_\theta}{\partial r})^2
J^{1/2}r_\theta^{-1}(\tfrac{\partial r_\theta}{\partial r})^{-1}\partial_rJ^{-1/2}
-(\tfrac{\partial r_\theta}{\partial r})^2r_\theta^{-2}\Delta_{\mathbb{S}^{d-1}}
\\&= 
-J^{1/2}(\tfrac{\partial r_\theta}{\partial r})\partial_r
(\tfrac{\partial r_\theta}{\partial r})^{-1}J^{-1/2}\partial_r
-J^{1/2}(\tfrac{\partial r_\theta}{\partial r})\partial_r
(\tfrac{\partial r_\theta}{\partial r})^{-1}(\partial_r J^{-1/2})
\\&\phantom{{}={}}
-(d-1)(\tfrac{\partial r_\theta}{\partial r})(r_\theta/r)^{-1}r^{-1}\partial_r
+\tfrac12(d-1)J^{-1}(\tfrac{\partial r_\theta}{\partial r})(r_\theta/r)^{-1}r^{-1}(\partial_rJ)
\\&\phantom{{}={}}
-(\tfrac{\partial r_\theta}{\partial r})^2(r_\theta/r)^{-2}r^{-2}\Delta_{\mathbb{S}^{d-1}}
\\&= 
-\partial_r^2
-J^{1/2}(\tfrac{\partial r_\theta}{\partial r})
\bigl(\partial_r(\tfrac{\partial r_\theta}{\partial r})^{-1}J^{-1/2}\bigr)\partial_r
\\&\phantom{{}={}}
+\tfrac12J^{-1}(\partial_r J)\partial_r
+\tfrac12J^{1/2}(\tfrac{\partial r_\theta}{\partial r})
\bigl(\partial_r(\tfrac{\partial r_\theta}{\partial r})^{-1}(\partial_rJ)J^{-3/2}\bigr)
\\&\phantom{{}={}}
-(d-1)(\tfrac{\partial r_\theta}{\partial r})(r_\theta/r)^{-1}r^{-1}\partial_r
+\tfrac12(d-1)J^{-1}(\tfrac{\partial r_\theta}{\partial r})(r_\theta/r)^{-1}r^{-1}(\partial_rJ)
\\&\phantom{{}={}}
-(\tfrac{\partial r_\theta}{\partial r})^2(r_\theta/r)^{-2}r^{-2}\Delta_{\mathbb{S}^{d-1}}
\\&= 
-\Delta 
+J^{-1}(\partial_rJ)\partial_r
+(\tfrac{\partial r_\theta}{\partial r})^{-1}(\tfrac{\partial^2 r_\theta}{\partial r^2})\partial_r
-\tfrac34J^{-2}(\partial_rJ)^2
+\tfrac12J^{-1}(\partial_r^2J)
\\&\phantom{{}={}}
-\tfrac12J^{-1}(\tfrac{\partial r_\theta}{\partial r})^{-1}(\tfrac{\partial^2 r_\theta}{\partial r^2})
(\partial_r J)
-(d-1)\!\left((\tfrac{\partial r_\theta}{\partial r})(r_\theta/r)^{-1}-1\right)\!r^{-1}\partial_r
\\&\phantom{{}={}}
+\tfrac12(d-1)J^{-1}(\tfrac{\partial r_\theta}{\partial r})(r_\theta/r)^{-1}r^{-1}(\partial_rJ)
-\left((\tfrac{\partial r_\theta}{\partial r})^2(r_\theta/r)^{-2}-1\right)\!
r^{-2}\Delta_{\mathbb{S}^{d-1}}.
\end{align*}
For notational simplicity, we introduce 
\begin{equation*}
\begin{split}
\phi(r) 
&= 
-\tfrac34J^{-2}(\partial_rJ)^2
+\tfrac12J^{-1}(\partial_r^2J)
-\tfrac12J^{-1}(\tfrac{\partial r_\theta}{\partial r})^{-1}(\tfrac{\partial^2 r_\theta}{\partial r^2})
(\partial_r J)
\\&\phantom{{}={}}
+\tfrac12(d-1)J^{-1}(\tfrac{\partial r_\theta}{\partial r})(r_\theta/r)^{-1}r^{-1}(\partial_rJ),
\end{split}
\end{equation*}
and then, we can write
\begin{equation}
\begin{split}\label{eq:2301201357}
(\tfrac{\partial r_\theta}{\partial r})^2U_\theta (-\Delta)U_\theta^{-1}
&= 
-\Delta 
+J^{-1}(\partial_rJ)\partial_r
+(\tfrac{\partial r_\theta}{\partial r})^{-1}(\tfrac{\partial^2 r_\theta}{\partial r^2})\partial_r
\\&\phantom{{}={}}
-(d-1)\!\left((\tfrac{\partial r_\theta}{\partial r})(r_\theta/r)^{-1}-1\right)\!r^{-1}\partial_r
+\phi(r) 
\\&\phantom{{}={}}
-\left((\tfrac{\partial r_\theta}{\partial r})^2(r_\theta/r)^{-2}-1\right)\!
r^{-2}\Delta_{\mathbb{S}^{d-1}}.
\end{split}
\end{equation}
We remark that $\phi$ is a complex-valued smooth function having bounded derivatives.
By the identity
\begin{equation*}
\mathop{\mathrm{Re}}(\chi^2\Delta)
=\chi \Delta\chi+|\nabla\chi|^2,
\end{equation*} 
we have 
\begin{equation}\label{eq:2210121201}
-\tfrac{\hbar^2}2\mathop{\mathrm{Re}}\langle \chi^2u, \Delta u\rangle
=\tfrac{\hbar^2}2\left(\|\nabla(\chi u)\|^2-\|(\nabla\chi)u\|^2\right).
\end{equation}
Noting that 
\begin{align*}
\partial_rJ 
&= 
\!\left(\partial_r(r_\theta/r)^{d-2s}(1+2s\theta\chi_R'(r^{2s}))\right)
\\&= 
(d-2s)(r_\theta/r)^{d-2s-1}(\partial_r\tfrac{r_\theta}{r})(1+2s\theta\chi_R'(r^{2s}))
\\&\phantom{{}={}}
+(2s)^2(r_\theta/r)^{d-2s}\theta\chi_R''(r^{2s})r^{2s-1}
\\&= 
(d-2s)(r_\theta/r)^{d-4s}2s\theta(\chi_R'(r^{2s})-\chi_R(r^{2s})r^{-2s})r^{-1}(1+2s\theta\chi_R'(r^{2s}))
\\&\phantom{{}={}}
+(2s)^2(r_\theta/r)^{d-2s}\theta\chi_R''(r^{2s})r^{2s-1}, 
\\
\partial_r^2J
&= 
(d-2s)(d-4s)(r_\theta/r)^{d-6s}(2s\theta)^2(\chi_R'(r^{2s})-\chi_R(r^{2s})r^{-2s})^2
\\&\phantom{{}=i{}} 
\times r^{-2}(1+2s\theta\chi_R'(r^{2s}))
\\&\phantom{{}={}}
+(d-2s)(r_\theta/r)^{d-4s}(2s)^2\theta(\chi_R''(r^{2s})r^{2s}-\chi_R'(r^{2s})+\chi_R(r^{2s})r^{-2s})
\\&\phantom{{}=+i{}}
\times r^{-2}(1+2s\theta\chi_R'(r^{2s}))
\\&\phantom{{}={}}
-(d-2s)(r_\theta/r)^{d-4s}2s\theta(\chi_R'(r^{2s})-\chi_R(r^{2s})r^{-2s})r^{-2}(1+2s\theta\chi_R'(r^{2s}))
\\&\phantom{{}={}}
+(d-2s)(r_\theta/r)^{d-4s}(2s\theta)^2(\chi_R'(r^{2s})-\chi_R(r^{2s})r^{-2s})\chi_R''(r^{2s})r^{2s-2}
\\&\phantom{{}={}}
+(d-2s)(2s)^3(r_\theta/r)^{d-4s}\theta^2(\chi_R'(r^{2s})-\chi_R(r^{2s})r^{-2s})r^{-1}\chi_R''(r^{2s})r^{2s-1} 
\\&\phantom{{}={}}
+(2s)^2(r_\theta/r)^{d-2s}\theta(2s\chi_R'''(r^{2s})r^{4s-2}+(2s-1)\chi_R''(r^{2s})r^{2s-2}), 
\\ 
\tfrac{\partial r_\theta}{\partial r} 
&= 
(r_\theta/r)^{1-2s}(1+2s\theta\chi_R'(r^{2s})), 
\\ 
\tfrac{\partial^2 r_\theta}{\partial r^2} 
&= 
(1-2s)(r_\theta/r)^{1-4s}2s\theta(\chi_R'(r^{2s})-\chi_R(r^{2s})r^{-2s})r^{-1}(1+2s\theta\chi_R'(r^{2s}))
\\&\phantom{{}={}}
+(2s)^2(r_\theta/r)^{1-2s}\theta\chi_R''(r^{2s})r^{2s-1} 
\end{align*}
and 
\begin{align*}
(\tfrac{\partial r_\theta}{\partial r})(r_\theta/r)^{-1}-1
&= 
(r_\theta/r)^{-2s}(1+2s\theta\chi_R'(r^{2s}))-1 
\\&= 
\dfrac{1+2s\theta\chi_R'(r^{2s})}{1+2s\theta\chi_R(r^{2s})r^{-2s}}-1
\\&= 
\dfrac{2s\theta(\chi_R'(r^{2s})-\chi_R(r^{2s})r^{-2s})}{1+2s\theta\chi_R(r^{2s})r^{-2s}},
\\
(\tfrac{\partial r_\theta}{\partial r})^2(r_\theta/r)^{-2}-1
&= 
(r_\theta/r)^{-4s}(1+2s\theta\chi_R'(r^{2s}))^2-1 
\\&= 
\dfrac{(1+2s\theta\chi_R'(r^{2s}))^2}{(1+2s\theta\chi_R(r^{2s})r^{-2s})^2}-1
\\&= 
\dfrac{4s\theta(\chi_R'(r^{2s})+s\theta\chi_R'^2(r^{2s})-\chi_R(r^{2s})r^{-2s}-s\theta\chi_R^2(r^{2s})r^{-4s})}
{(1+2s\theta\chi_R(r^{2s})r^{-2s})^2},
\end{align*}
we can bound the first term of \eqref{eq:2210111611} as, 
by using the Cauchy--Schwarz inequality and \eqref{eq:2210121201}, 
\begin{equation}
\begin{split}\label{eq:2301181354}
&\tfrac{\hbar^2}2\mathop{\mathrm{Re}}\langle \chi^2u, 
(\tfrac{\partial r_\theta}{\partial r})^2U_\theta (-\Delta)U_\theta^{-1}u\rangle
\\&\ge 
\hbar^2(\tfrac12-C_1\beta)\|\nabla(\chi u)\|^2+O(\hbar^2)\|u\|^2,
\end{split}
\end{equation}
where $C_1>0$ is a certain constant which is independent of both $\hbar$ and $\beta$.
In the following we frequently use the Cauchy--Schwarz inequality without mentioning.

As for the second term of \eqref{eq:2210111611}, 
we evaluate it by using assumption (i) of the theorem. 
We have for any $\varepsilon\in(0,1)$ 
\begin{equation}
\begin{split}\label{eq:2302061554}
\tfrac{\hbar^2}2\mathop{\mathrm{Im}}\langle \tilde\chi^2u, \Delta u\rangle
&= 
\hbar^2\mathop{\mathrm{Im}}\langle (\nabla\chi)u, \nabla(\chi u)\rangle
\\&\ge 
-\hbar^2\left(\varepsilon\|\nabla(\chi u)\|^2 + (4\varepsilon)^{-1}\|(\nabla\chi)u\|^2\right).
\end{split}
\end{equation}
Since we have on $\mathop{\mathrm{supp}}\tilde\chi$ 
\begin{align*}
J^{-1}(\partial_rJ) 
&= 
-2s\theta(d-2s)(r_\theta/r)^{-2s}r^{-2s-1}, 
\\ 
(\tfrac{\partial r_\theta}{\partial r})^{-1}(\tfrac{\partial^2 r_\theta}{\partial r^2})
&= 
-2s\theta(1-2s)(r_\theta/r)^{-2s}r^{-2s-1},
\\
(\tfrac{\partial r_\theta}{\partial r})(r_\theta/r)^{-1}-1
&= 
(r_\theta/r)^{-2s}-1 
\end{align*}
and 
\begin{equation*}
(r_\theta/r)^{-2s}=\dfrac1{1+2s\theta r^{-2s}}=\dfrac{1-\ri 2s\beta r^{-2s}}{1+4s^2\beta^2 r^{-4s}},
\end{equation*}
it holds that 
\begin{align*}
&-\tfrac{\hbar^2}2\mathop{\mathrm{Im}}\langle \tilde\chi^2u, 
[J^{-1}(\partial_rJ)
+(\tfrac{\partial r_\theta}{\partial r})^{-1}(\tfrac{\partial^2 r_\theta}{\partial r^2})
-(d-1)\!\left((\tfrac{\partial r_\theta}{\partial r})(r_\theta/r)^{-1}-1\right)\!r^{-1}]\partial_r
u\rangle
\\&\ge 
-4s^2\hbar^2\beta^2|1-2s||\langle \tilde\chi u,  \tfrac{r^{-4s-1}}{1+4s^2\beta^2r^{-4s}}\partial_r\tilde\chi u\rangle|
+O(\hbar^2)\|u\|^2.
\end{align*}
Moreover, noting that on $\mathop{\mathrm{supp}}\tilde\chi$ 
\begin{align*}
\mathop{\mathrm{Im}}\left[(\tfrac{\partial r_\theta}{\partial r})^2(r_\theta/r)^{-2}-1\right]
&= 
\mathop{\mathrm{Im}}\dfrac{1}{(1+2s\theta r^{-2s})^2}
= 
\dfrac{-4s\beta r^{-2s}}{(1+4s^2\beta^2 r^{-4s})^2},
\end{align*}
we have 
\begin{align*}
&-\tfrac{\hbar^2}2\mathop{\mathrm{Im}}\langle \tilde\chi^2u, 
-\left((\tfrac{\partial r_\theta}{\partial r})^2(r_\theta/r)^{-2}-1\right)\!
r^{-2}\Delta_{\mathbb{S}^{d-1}}u\rangle
\\&= 
\tfrac{\hbar^2}2\langle \tilde\chi^2u, 
\tfrac{-4s\beta r^{-2s}}{(1+4s^2\beta^2 r^{-4s})^2}r^{-2}\Delta_{\mathbb{S}^{d-1}}u\rangle
\\&\ge 0.
\end{align*}
Therefore we have by letting $\varepsilon=1/4$ 
\begin{equation}\label{eq:2210121208}
\begin{split}
&-\tfrac{\hbar^2}2\mathop{\mathrm{Im}}\langle \tilde\chi^2u, 
(\tfrac{\partial r_\theta}{\partial r})^2U_\theta (-\Delta)U_\theta^{-1}u\rangle
\\&\ge 
-\tfrac14\hbar^2\|\nabla(\chi u)\|^2 
-4s^2\hbar^2\beta^2|1-2s||\langle \tilde\chi u,  \tfrac{r^{-4s-1}}{1+4s^2\beta^2r^{-4s}}\partial_r\tilde\chi u\rangle|
+O(\hbar^2)\|u\|^2.
\end{split}
\end{equation}
Here we remark that we can bound the second term on the right-hand side of \eqref{eq:2210121208} 
as, by using a bootstrap argument, 
\begin{equation}
\begin{split}\label{eq:2301201448}
&-4s^2\hbar^2\beta^2|1-2s||\langle \tilde\chi u,  \tfrac{r^{-4s-1}}{1+4s^2\beta^2r^{-4s}}\partial_r\tilde\chi u\rangle|
\\&\ge 
O(\beta^2)\mathop{\mathrm{Re}}
\langle u, \chi_R^2(r^{2s})r^{-8s-2}(\tfrac{\partial r_\theta}{\partial r})^2(H_\theta-z)u\rangle 
+O(\beta^2)\|u\|^2
\end{split}
\end{equation}
for sufficiently small $\beta>0$.
Let us confirm it. 
We have
\begin{equation}
\begin{split}\label{eq:2301201451}
&-4s^2\hbar^2\beta^2|1-2s||\langle \tilde\chi u,  \tfrac{r^{-4s-1}}{1+4s^2\beta^2r^{-4s}}\partial_r\tilde\chi u\rangle| 
\\&\ge 
-\hbar^2\beta^2\|\chi_R(r^{2s})r^{-4s-1}\partial_ru\|^2+O(\hbar^2\beta^2)\|u\|.
\end{split}
\end{equation}
By the expression of $U_\theta(-\Delta)U_\theta^{-1}$, see \eqref{eq:2301201357}, 
it holds that 
\begin{align*}
&\hbar^2\|\chi_R(r^{2s})r^{-4s-1}\partial_ru\|^2
\\&= 
\hbar^2\mathop{\mathrm{Re}}\langle \chi_R(r^{2s})r^{-4s-1}\partial_ru, \chi_R(r^{2s})r^{-4s-1}\partial_ru\rangle 
\\&\phantom{{}={}}
+\hbar^2\langle u, \chi_R^2(r^{2s})r^{-8s-2}(-r^{-2}\Delta_{\mathbb{S}^{d-1}})u\rangle 
\\&\phantom{{}={}}
-\hbar^2\langle u, \chi_R^2(r^{2s})r^{-8s-2}(-r^{-2}\Delta_{\mathbb{S}^{d-1}})u\rangle 
\\&\le 
\hbar^2\mathop{\mathrm{Re}}\langle u, \chi_R^2(r^{2s})r^{-8s-2}(-\Delta)u\rangle 
\\&\phantom{{}={}}
-\hbar^2\langle u, \chi_R^2(r^{2s})r^{-8s-2}(-r^{-2}\Delta_{\mathbb{S}^{d-1}})u\rangle 
+C_2\hbar^2\|u\|^2
\\&= 
\hbar^2\mathop{\mathrm{Re}}\langle u, \chi_R^2(r^{2s})r^{-8s-2}(\tfrac{\partial r_\theta}{\partial r})^2
[U_\theta(-\Delta)U_\theta^{-1}]u\rangle 
\\&\phantom{{}={}}
-\hbar^2\mathop{\mathrm{Re}}\langle u, \chi_R^2(r^{2s})r^{-8s-2}
[J^{-1}(\partial_rJ)+(\tfrac{\partial r_\theta}{\partial r})^{-1}(\tfrac{\partial^2 r_\theta}{\partial r^2})]\partial_ru\rangle 
\\&\phantom{{}={}}
+(d-1)\hbar^2\mathop{\mathrm{Re}}\langle u, \chi_R^2(r^{2s})r^{-8s-2}
[(\tfrac{\partial r_\theta}{\partial r})(r_\theta/r)^{-1}-1]r^{-1}\partial_ru\rangle 
\\&\phantom{{}={}}
+\hbar^2\mathop{\mathrm{Re}}\langle u, \chi_R^2(r^{2s})r^{-8s-2}
[(\tfrac{\partial r_\theta}{\partial r})^2(r_\theta/r)^{-2}-1]r^{-2}\Delta_{\mathbb{S}^{d-1}}u\rangle 
\\&\phantom{{}={}}
-\hbar^2\mathop{\mathrm{Re}}\langle u, \chi_R^2(r^{2s})r^{-8s-2}\phi(r)u\rangle 
-\hbar^2\langle u, \chi_R^2(r^{2s})r^{-8s-2}(-r^{-2}\Delta_{\mathbb{S}^{d-1}})u\rangle 
+C_2\hbar^2\|u\|^2
\\&\le 
2\mathop{\mathrm{Re}}\langle u, \chi_R^2(r^{2s})r^{-8s-2}(\tfrac{\partial r_\theta}{\partial r})^2
(H_\theta-z)u\rangle 
\\&\phantom{{}={}}
+\hbar^2|\langle u, \chi_R^2(r^{2s})r^{-8s-3}O(\beta)\partial_ru\rangle| 
\\&\phantom{{}={}}
-(1+O(\beta^2))\hbar^2\langle u, \chi_R^2(r^{2s})r^{-8s-2}(-r^{-2}\Delta_{\mathbb{S}^{d-1}})u\rangle 
+C_3\|u\|^2.
\end{align*}
Moreover we have for any $\varepsilon\in(0,1)$ 
\begin{align*}
|\langle u, \chi_R^2(r^{2s})r^{-8s-3}O(\beta)\partial_ru\rangle| 
&\le 
\varepsilon\beta^2\|\chi_R(r^{2s})r^{-4s-1}\partial_ru\|^2
+\varepsilon^{-1}C_4\|u\|^2, 
\end{align*}
and thus, for sufficiently small $\varepsilon>0$ and $\beta>0$ we have
\begin{align*}
&\hbar^2\|\chi_R(r^{2s})r^{-4s-1}\partial_ru\|^2 
\\&\le 
(1-\varepsilon\beta^2)^{-1}2\mathop{\mathrm{Re}}\langle u, \chi_R^2(r^{2s})r^{-8s-2}(\tfrac{\partial r_\theta}{\partial r})^2
(H_\theta-z)u\rangle 
\\&\phantom{{}={}}
-(1-\varepsilon\beta^2)^{-1}(1+O(\beta^2))\hbar^2
\langle u, \chi_R^2(r^{2s})r^{-8s-2}(-r^{-2}\Delta_{\mathbb{S}^{d-1}})u\rangle 
+C_5\|u\|^2
\\&\le 
(1-\varepsilon\beta^2)^{-1}2\mathop{\mathrm{Re}}\langle u, \chi_R^2(r^{2s})r^{-8s-2}(\tfrac{\partial r_\theta}{\partial r})^2
(H_\theta-z)u\rangle 
+C_5\|u\|^2.
\end{align*}
By substituting this inequality into \eqref{eq:2301201451}, we can obtain \eqref{eq:2301201448}.
By summing up \eqref{eq:2301181354}, \eqref{eq:2210121208} and \eqref{eq:2301201448}, 
one has the bound
\begin{equation}
\begin{split}\label{eq:2210121213}
&\tfrac{\hbar^2}2\mathop{\mathrm{Re}}\langle \chi^2u, 
(\tfrac{\partial r_\theta}{\partial r})^2U_\theta (-\Delta)U_\theta^{-1}u\rangle
-\tfrac{\hbar^2}2\mathop{\mathrm{Im}}\langle \tilde\chi^2u, 
(\tfrac{\partial r_\theta}{\partial r})^2U_\theta (-\Delta)U_\theta^{-1}u\rangle
\\&\ge 
\hbar^2(\tfrac14-C_1\beta)\|\nabla(\chi u)\|^2 
+O(\beta^2)\mathop{\mathrm{Re}}
\langle u, \chi_R^2(r^{2s})r^{-8s-2}(\tfrac{\partial r_\theta}{\partial r})^2(H_\theta-z)u\rangle 
\\&\phantom{{}={}}
+O(\hbar^2)\|u\|^2
+O(\beta^2)\|u\|^2
\\&\ge 
O(\beta^2)\mathop{\mathrm{Re}}
\langle u, \chi_R^2(r^{2s})r^{-8s-2}(\tfrac{\partial r_\theta}{\partial r})^2(H_\theta-z)u\rangle 
\\&\phantom{{}={}}
+O(\hbar^2)\|u\|^2
+O(\beta^2)\|u\|^2
\end{split}
\end{equation}
for $\beta>0$ small enough. 

Next we bound the third term of \eqref{eq:2210111611}. 
Using the Taylor expansion of $q_\theta$ in $\theta$ 
\begin{equation*}
q_\theta=q+r^{1-2s}\chi_R(r^{2s})(\partial_rq)\theta+O(\theta^2) 
\end{equation*}
and assumption (ii) of the theorem, we see that 
\begin{equation}
\begin{split}\label{eq:2210121343}
&\mathop{\mathrm{Re}}\langle \chi u, 
(\tfrac{\partial r_\theta}{\partial r})^2(-\tfrac12r_\theta^{2s}+q_\theta-z)\chi u\rangle
\\&= 
\langle \chi u, (-\tfrac12r^{2s}+q-E +O(\beta^2))\chi u\rangle
\\&\ge 
\alpha\|\chi u\|^2+O(\beta^2)\|u\|^2.
\end{split}
\end{equation}
Similarly, by using assumption (iii) of the theorem we have 
\begin{equation}
\begin{split}\label{eq:2210121345}
&-\mathop{\mathrm{Im}}\langle \tilde\chi u, 
(\tfrac{\partial r_\theta}{\partial r})^2(-\tfrac12r_\theta^{2s}+q_\theta-z)\tilde\chi u\rangle
\\&= 
-\langle \tilde\chi u, 
[-(1-2s)+2(1-2s)r^{-2s}(q-E) -(s-r^{1-2s}(\partial_rq)-\mu)]\beta\tilde\chi u\rangle
\\&\phantom{{}={}}
+O(\beta^2)\|u\|^2 
\\&\ge 
\beta\gamma\|\tilde\chi u\|^2 
+O(\beta^2)\|u\|^2. 
\end{split}
\end{equation}
Hence by the expression \eqref{eq:2210111611} 
and the bounds \eqref{eq:2210121213}, \eqref{eq:2210121343} and \eqref{eq:2210121345}, it holds that 
\begin{align*}
\mathop{\mathrm{Re}}\langle v, (H_\theta-z)u\rangle
&\ge 
(\min\{\alpha, \beta\gamma\}+O(\hbar^2)+O(\beta^2))\|u\|^2
\\&\phantom{{}={}}
+O(\beta^2)\mathop{\mathrm{Re}}
\langle u, \chi_R^2(r^{2s})r^{-8s-2}(\tfrac{\partial r_\theta}{\partial r})^2(H_\theta-z)u\rangle. 
\end{align*}
We are done.
\end{proof}

\subsection{Inverted harmonic oscillator}\label{inverted}

This is the last section and we give a proof of Theorem~\ref{thm:no-resonance} for $s=1$. 
Since the proof is quite similar to that for $s\in(0,1)$, 
we sometimes omit computations. 

\begin{proof}[Proof of Theorem~\ref{thm:no-resonance} with $s=1$]
Let $\theta=\ri\beta$. 
Take any $u\in\vD(H_\theta)$, $\|u\|=1$. 
Then we set $v=\overline{(\tfrac{\partial r_\theta}{\partial r})}^2(\chi^2-\ri\tilde\chi^2)u$. 
Note that $\|v\|\le (1+O(\beta))\|u\|$.
We show that for $z=E-\ri\beta\mu$
\begin{equation}
\begin{split}\label{eq:lower-bound-2}
&\mathop{\mathrm{Re}}\langle v, (H_\theta-z)u\rangle 
\\&\ge 
(\min\{\alpha, \beta\gamma\}+O(\hbar^2)+O(\beta^2))\|u\|^2
+\beta^2\mathop{\mathrm{Re}}\langle u, \psi(r)(H_\theta-z)u\rangle, 
\end{split}
\end{equation}
where $\psi(r)$ is a certain bounded function. 
Then by combining with the inequalities
\begin{align*}
(1+O(\beta))\|u\|\|(H_\theta-z)u\| 
&\ge 
\|v\|\|(H_\theta-z)u\| 
\ge 
\mathop{\mathrm{Re}}\langle v, (H_\theta-z)u\rangle, 
\\ 
O(\beta^2)\|u\|\|(H_\theta-z)u\| 
&\ge 
-\beta^2\mathop{\mathrm{Re}}\langle u, \psi(r)(H_\theta-z)u\rangle, 
\end{align*}
we have the assertion.

The left-hand side of \eqref{eq:lower-bound-2} is expressed as 
\begin{equation}
\begin{split}\label{eq:2210111611-2}
\mathop{\mathrm{Re}}\langle v, (H_\theta-z)u\rangle
&=
\tfrac{\hbar^2}2\mathop{\mathrm{Re}}\langle \chi^2u, 
(\tfrac{\partial r_\theta}{\partial r})^2U_\theta (-\Delta)U_\theta^{-1}u\rangle
\\&\phantom{{}={}}
-\tfrac{\hbar^2}2\mathop{\mathrm{Im}}\langle \tilde\chi^2u, 
(\tfrac{\partial r_\theta}{\partial r})^2U_\theta (-\Delta)U_\theta^{-1}u\rangle
\\&\phantom{{}={}}
+\mathop{\mathrm{Re}}\langle \chi u, 
(\tfrac{\partial r_\theta}{\partial r})^2(-\tfrac12r_\theta^2+q_\theta-z)\chi u\rangle
\\&\phantom{{}={}}
-\mathop{\mathrm{Im}}\langle \tilde\chi u, 
(\tfrac{\partial r_\theta}{\partial r})^2(-\tfrac12r_\theta^2+q_\theta-z)\tilde\chi u\rangle.
\end{split}
\end{equation}
Note that the expression \eqref{eq:2301201357} also holds even for the case of $s=1$. 
Since we have \eqref{eq:2210121201}, 
\begin{align*}
\partial_rJ 
  &= 
(d-2)(r_\theta/r)^{d-4}\theta(-2r^{-3}\log r\chi_R(r^2)+r^{-3}\chi_R(r^2)+2r^{-1}\log r\chi_R'(r^2)) 
\\&\phantom{{}=i{}}
\times (1+r^{-2}\chi_R(r^2)\theta+2\log r\chi_R'(r^2)\theta)
\\&\phantom{{}={}}
+(r_\theta/r)^{d-2}\theta(-2r^{-3}\chi_R(r^2)+4r^{-1}\chi_R'(r^2)+4r\log r\chi_R''(r^2)),
\\
\partial_r^2J
  &= 
(d-2)(d-4)(r_\theta/r)^{d-6}\theta^2(2r^{-3}\log r\chi_R(r^2)-r^{-3}\chi_R(r^2)-2r^{-1}\log r\chi_R'(r^2))^2 
\\&\phantom{{}=i{}}
\times (1+r^{-2}\chi_R(r^2)\theta+2\log r\chi_R'(r^2)\theta)
\\&\phantom{{}={}}
+(d-2)(r_\theta/r)^{d-4}\theta\left[(6\log r-5)r^{-4}\chi_R(r^2)-(6\log r-4)r^{-2}\chi_R'(r^2)\right. 
\\&\phantom{{}=+i{}}
\left. +4\log r\chi_R''(r^2)\right]\!\left(1+r^{-2}\chi_R(r^2)\theta+2\log r\chi_R'(r^2)\theta\right)
\\&\phantom{{}={}}
+(d-2)(r_\theta/r)^{d-4}\theta^2(2r^{-3}\log r\chi_R(r^2)-r^{-3}\chi_R(r^2)-2r^{-1}\log r\chi_R'(r^2))
\\&\phantom{{}=+i{}}
\times (2r^{-3}\chi_R(r^2)-4r^{-1}\chi_R'(r^2)-4r\log r\chi_R''(r^2))
\\&\phantom{{}={}}
+(d-2)(r_\theta/r)^{d-4}\theta^2
(2r^{-3}\log r\chi_R(r^2)-r^{-3}\chi_R(r^2)-2r^{-1}\log r\chi_R'(r^2))
\\&\phantom{{}=+i{}}
\times (2r^{-3}\chi_R(r^2)-4r^{-1}\chi_R'(r^2)-4r\log r\chi_R''(r^2))
\\&\phantom{{}={}}
+(r_\theta/r)^{d-2}\theta\left[6r^{-4}\chi_R(r^2)-8r^{-2}\chi_R'(r^2)+12\chi_R''(r^2)+4\log r\chi_R''(r^2)\right.  
\\&\phantom{{}=+i{}}
\left. +8r^2\log r\chi_R'''(r^2)\right]\!,
\\ 
\tfrac{\partial r_\theta}{\partial r} 
  &= 
(r_\theta/r)^{-1}(1+r^{-2}\chi_R(r^2)\theta+2\log r\chi_R'(r^2)\theta), 
\\ 
\tfrac{\partial^2 r_\theta}{\partial r^2} 
  &= 
-(r_\theta/r)^{-3}\theta
(-2r^{-3}\log r\chi_R(r^2)+r^{-3}\chi_R(r^2)+2r^{-1}\log r\chi_R'(r^2)) 
\\&\phantom{{}=+i{}}
\times (1+r^{-2}\chi_R(r^2)\theta+2\log r\chi_R'(r^2)\theta) 
\\&\phantom{{}={}}
-(r_\theta/r)^{-1}\theta(2r^{-3}\chi_R(r^2)-4r^{-1}\chi_R'(r^2)-4r\log r\chi_R''(r^2))
\end{align*}
and 
\begin{align*}
&(\tfrac{\partial r_\theta}{\partial r})(r_\theta/r)^{-1}-1
  = 
\dfrac{\theta(r^{-2}\chi_R(r^2)+2\log r\chi_R'(r^2)-2r^{-2}\log r\chi_R(r^2))}
{1+2r^{-2}\log r\chi_R(r^2)\theta},
\\
&(\tfrac{\partial r_\theta}{\partial r})^2(r_\theta/r)^{-2}-1
  \\&= 
\theta\left[ r^{-4}\chi_R^2(r^2)\theta+4(\log r)^2\chi_R'^2(r^2)\theta+2r^{-2}\chi_R(r^2)
+4\log r\chi_R'(r^2)\right.
\\&\phantom{{}=+i{}}
+4r^{-2}\log r\chi_R(r^2)\chi_R'(r^2)\theta-4r^{-2}\log r\chi_R(r^2) 
\\&\phantom{{}=+i{}}
-\left. 4r^{-4}(\log r)^2\chi_R(r^2)^2\theta\right]
/(1+2r^{-2}\log r\chi_R^2(r^2)\theta)^2,
\end{align*}
we can bound the first term of \eqref{eq:2210111611-2} as 
\begin{equation}
\begin{split}\label{eq:2301181354-2}
&\tfrac{\hbar^2}2\mathop{\mathrm{Re}}\langle \chi^2u, 
(\tfrac{\partial r_\theta}{\partial r})^2U_\theta (-\Delta)U_\theta^{-1}u\rangle
\\&\ge 
\hbar^2(\tfrac12-C_1\beta)\|\nabla(\chi u)\|^2+O(\hbar^2)\|u\|^2,
\end{split}
\end{equation}
where $C_1>0$ is a certain constant which is independent of both $\hbar$ and $\beta$.

Next we bound the second term of \eqref{eq:2210111611-2}. 
Noting that we have on $\mathop{\mathrm{supp}}\tilde\chi$ 
\begin{align*}
J^{-1}(\partial_rJ) 
  &= 
-\theta(d-2)(r_\theta/r)^{-2}r^{-3}(2\log r-1), 
\\ 
(\tfrac{\partial r_\theta}{\partial r})^{-1}(\tfrac{\partial^2 r_\theta}{\partial r^2})
  &= 
\theta(r_\theta/r)^{-2}r^{-3}(2\log r-1)-2r^{-3}\theta/(1+r^{-2}\theta),
\\
(\tfrac{\partial r_\theta}{\partial r})(r_\theta/r)^{-1}-1
  &= 
(r_\theta/r)^{-2}(1+r^{-2}\theta)-1, 
\\
(\tfrac{\partial r_\theta}{\partial r})^2(r_\theta/r)^{-2}-1
  &= 
(r_\theta/r)^{-4}(1+r^{-2}\theta)^2-1 
\end{align*}
and 
\begin{equation*}
(r_\theta/r)^{-2}=\dfrac1{1+2\theta r^{-2}\log r}
=\dfrac{1-\ri 2\beta r^{-2}\log r}{1+4\beta^2 r^{-4}(\log r)^2},
\end{equation*}
we have 
\begin{align*}
&-\tfrac{\hbar^2}2\mathop{\mathrm{Im}}\langle \tilde\chi^2u, 
[J^{-1}(\partial_rJ)
+(\tfrac{\partial r_\theta}{\partial r})^{-1}(\tfrac{\partial^2 r_\theta}{\partial r^2})
-(d-1)\!\left((\tfrac{\partial r_\theta}{\partial r})(r_\theta/r)^{-1}-1\right)\!r^{-1}]\partial_r
u\rangle
\\&\ge 
-2\hbar^2\beta^2|\langle \tilde\chi u, 
\tfrac{r^{-5}\log r(2\log r-1)}{1+4\beta^2r^{-4}(\log r)^2}\partial_r\tilde\chi u\rangle|
-\hbar^2\beta^2|\langle \tilde\chi u, 
\tfrac{r^{-5}}{1+\beta^2r^{-4}}\partial_r\tilde\chi u\rangle|
+O(\hbar^2)\|u\|^2
\end{align*}
and, by using assumption (i), 
\begin{align*}
&-\tfrac{\hbar^2}2\mathop{\mathrm{Im}}\langle \tilde\chi^2u, 
-\left((\tfrac{\partial r_\theta}{\partial r})^2(r_\theta/r)^{-2}-1\right)\!
r^{-2}\Delta_{\mathbb{S}^{d-1}}u\rangle
\\&\ge 
-C_1\hbar^2\beta^3\langle \tilde\chi^2u, r^{-6}(\log r)(-r^{-2}\Delta_{\mathbb{S}^{d-1}})u\rangle.
\end{align*}
By a bootstrap argument we can obtain the bounds, for sufficiently small $\beta>0$, 
\begin{align*}
&-2\hbar^2\beta^2|\langle \tilde\chi u, 
\tfrac{r^{-5}\log r(2\log r-1)}{1+4\beta^2r^{-4}(\log r)^2}\partial_r\tilde\chi u\rangle|
\\&\ge 
-\beta^2\mathop{\mathrm{Re}}\langle u, \psi_1(r)(H_\theta-z)u\rangle 
+O(\beta^2)\|u\|^2, 
\\
&-\hbar^2\beta^2|\langle \tilde\chi u, 
\tfrac{r^{-5}}{1+\beta^2r^{-4}}\partial_r\tilde\chi u\rangle|
\\&\ge 
-\beta^2\mathop{\mathrm{Re}}\langle u, \psi_2(r)(H_\theta-z)u\rangle 
+O(\beta^2)\|u\|^2 
\end{align*}
and 
\begin{align*}
&-C_1\hbar^2\beta^3\langle \tilde\chi^2u, r^{-6}(\log r)(-r^{-2}\Delta_{\mathbb{S}^{d-1}})u\rangle
\\&\ge 
-\beta^3\mathop{\mathrm{Re}}\langle u, \psi_3(r)(H_\theta-z)u\rangle 
+O(\beta^2)\|u\|^2. 
\end{align*}
Here ${\psi_j}'s$ are uniformly bounded smooth functions, although they depend on $\beta>0$. 
Therefore by \eqref{eq:2302061554} with $\varepsilon=1/4$, we have 
\begin{equation}
\begin{split}\label{eq:2302061743}
&-\tfrac{\hbar^2}2\mathop{\mathrm{Im}}\langle \tilde\chi^2u, 
(\tfrac{\partial r_\theta}{\partial r})^2U_\theta (-\Delta)U_\theta^{-1}u\rangle
\\&\ge 
-\beta^2\mathop{\mathrm{Re}}\langle u, \psi_4(r)(H_\theta-z)u\rangle 
-\tfrac{\hbar^2}4\|\nabla(\chi u)\|^2
+O(\hbar^2)\|u\|^2 
+O(\beta^2)\|u\|^2, 
\end{split}
\end{equation}
where $\psi_4=\psi_1+\psi_2+\beta\psi_3$.
Let us bound the third and the fourth term of \eqref{eq:2210111611-2} 
by using the assumptions (ii) and (iii) of the theorem, respectively.
Using the Taylor expansion of $q_\theta$ in $\theta$ 
\begin{equation*}
q_\theta=q+r^{-1}\log r\chi_R(r^2)(\partial_rq)\theta+O(\theta^2), 
\end{equation*}
we have 
\begin{equation}
\begin{split}\label{eq:2210121343-2}
\mathop{\mathrm{Re}}\langle \chi u, 
(\tfrac{\partial r_\theta}{\partial r})^2(-\tfrac12r_\theta^2+q_\theta-z)\chi u\rangle
&= 
\langle \chi u, (-\tfrac12r^2+q-E +O(\beta^2))\chi u\rangle
\\&\ge 
\alpha\|\chi u\|^2+O(\beta^2)\|u\|^2.
\end{split}
\end{equation}
Similarly, we can see that 
\begin{equation}
\begin{split}\label{eq:2210121345-2}
&-\mathop{\mathrm{Im}}\langle \tilde\chi u, 
(\tfrac{\partial r_\theta}{\partial r})^2(-\tfrac12r_\theta^2+q_\theta-z)\tilde\chi u\rangle
\\&= 
-\langle \tilde\chi u, 
\left[-2r^{-2}\log r(-\tfrac12r^2+q-E)+2r^{-2}(-\tfrac12r^2+q-E) \right. 
\\&\phantom{{}=+++{}}
\left. +(-\log r+r^{-1}\log r (\partial_rq)+\mu)\right]\!\beta\tilde\chi u\rangle
+O(\beta^2)\|u\|^2 
\\&\ge 
\beta\gamma\|\tilde\chi u\|^2 
+O(\beta^2)\|u\|^2. 
\end{split}
\end{equation}
Hence by the expression \eqref{eq:2210111611-2} 
and the bounds 
\eqref{eq:2301181354-2}--\eqref{eq:2210121343-2} and \eqref{eq:2210121345-2}, we finally have 
\begin{align*}
\mathop{\mathrm{Re}}\langle v, (H_\theta-z)u\rangle
&\ge 
(\min\{\alpha, \beta\gamma\}+O(\hbar^2)+O(\beta^2))\|u\|^2
\\&\phantom{{}={}}
-\beta^2\mathop{\mathrm{Re}}
\langle u, \psi_4(r)(H_\theta-z)u\rangle 
\end{align*}
for sufficiently small $\beta>0$, and then the assertion \eqref{eq:lower-bound-2} follows.

\end{proof}

\section*{Acknowledgments}
The author would like to thank K.~Higuchi and K.~Ito for helpful discussions.
The author is supported by JSPS KAKENHI Grant Number 19H05599.


\end{document}